\newtheorem{theorem}{Theorem}
\newtheorem{definition}{Definition}
\newtheorem{lemma}{Lemma}
\newtheorem{corollary}{Corollary}
\newcommand{\half}{\mbox{$\textstyle \frac{1}{2}$}}
 \DeclareFontFamily{U}{bbold}{}
\DeclareFontShape{U}{bbold}{m}{n}
 {
  <-5.5> s*[1.069] bbold5
  <5.5-6.5> s*[1.069] bbold6
  <6.5-7.5> s*[1.069] bbold7
  <7.5-8.5> s*[1.069] bbold8
  <8.5-9.5> s*[1.069] bbold9
  <9.5-11> s*[1.069] bbold10
  <11-15> s*[1.069] bbold12
  <15-> s*[1.069] bbold17
 }{}
\DeclareRobustCommand{\identity}{%
  \text{\usefont{U}{bbold}{m}{n}1}%
}
\renewcommand{\epsilon}{\varepsilon}
\newcounter{arraycard}
\newcommand{\drawchain}[2][0.4\textwidth]{
\def\firstlist{#2,0}
\begin{center}
\begin{adjustbox}{width=#1}
\begin{tikzpicture}
\setcounter{arraycard}{0}
  \foreach \x in \firstlist {%
    \stepcounter{arraycard}
  }
  \foreach \x [count=\c]  in \firstlist
  {
  	\ifthenelse{\c=\value{arraycard}}{}{\draw [thick] (2*\c,0) -- node[above,pos=0.5] {\x} (2*\c+2,0);}
    \node[circle,style={fill=black,minimum width=0.8cm,text=white}] at (2*\c,0) {};
  }

\end{tikzpicture}
\end{adjustbox}
\end{center}
}
\newcommand{\drawchainwithfields}[3][0.4\textwidth]{
\def\firstlist{#2,0}
\def\secondlist{#3}
\begin{center}
\begin{adjustbox}{width=#1}
\begin{tikzpicture}
\setcounter{arraycard}{0}
  \foreach \x in \firstlist {%
    \stepcounter{arraycard}
  }
  \foreach \x [count=\c,
  parallel foreach=\y in \secondlist via \c]
  in \firstlist
  {
  \IfInteger{\y}{
    \ifthenelse{\y=0}{}{
    \node[anchor=south] at (2*\c,0.4cm) {\scriptsize\y};
    }
    }{
	\node[anchor=south] at (2*\c,0.4cm) {\scriptsize\y};
    }
    \ifthenelse{\c=\value{arraycard}}{}{\draw [thick] (2*\c,0) -- node[above,pos=0.5] {\x} (2*\c+2,0);}
    \node[circle,style={fill=black,minimum width=0.8cm,text=white}] at (2*\c,0) {};
  }

\end{tikzpicture}
\end{adjustbox}
\end{center}
}
\begin{document}

\title{The Limits of Quantum State Transfer for Field-Free Heisenberg Chains}
\date{\today}

\author{Alastair \surname{Kay}}
\affiliation{Department of Mathematics, Royal Holloway University of London, Egham, Surrey, TW20 0EX, UK}
\email{alastair.kay@rhul.ac.uk}
\begin{abstract}
In a one-dimensional Heisenberg chain, we show that there are no sets of coupling strengths such that the evolution perfectly transfers a quantum state between the two ends of the chain without the addition of magnetic fields. In lieu of perfect transfer, we consider a range of options for achieving high quality transfer, whether in finite time, or via ``pretty good'' transfer where one waits long times in the hope of getting arbitrarily close to perfect transfer. In attempting to engineer arbitrarily accurate transfer, we explore a new paradigm that facilitates time estimates for achieving any target accuracy $\epsilon$ for the transfer.
\end{abstract}

\maketitle

\section{Introduction}

True mastery of a physical theory is demonstrated when we transition from observing physical phenomena to explaining them, and ultimately to controlling them, seeking to induce particular behaviour for a practical purpose. This is the aim for the burgeoning field of quantum technologies. The myriad challenges, with decoherence taking centre stage, are apparent in the fact that we still remain some distance from a fully operational, universal, scalable quantum computer. Many quantum technologists are thus focussed on simpler, short-term applications with more limited scope. Nevertheless, achieving control of multiple quantum bits and maintaining coherence remains a challenge.

Reduced complexity of control sequences, and implementation time, can have a huge impact on the practicality of any given protocol. In scenarios such as the transfer of a quantum state \cite{bose2003,christandl2004,kay2010a}, the generation of GHZ states \cite{clark2005,kay2017c} and optimal cloning \cite{kay2017c}, it is possible to almost entirely dispense with time varying control of a system by relying upon the evolution of a carefully tuned, fixed, Hamiltonian. These protocols even demonstrate a reduction in implementation time, and hence decoherence, compared to the traditional gate model. The underlying theory is broadly applicable, including the Heisenberg and exchange models, which in turn translate to a wide range of experimental scenarios including the solid state \cite{majer2007,plantenberg2007}, trapped ions \cite{islam2011}, or even photonic systems \cite{perez-leija2013,chapman2016}.

What are the ultimate limits of these restrictions? We will primarily focus on state transfer, as it is the best understood, but other state synthesis tasks \cite{kay2017,kay2017a,kay2017c} could be considered. Perfect transfer for uniformly coupled systems is impossible in all but the shortest chains \cite{bose2003,christandl2005}. Magnetic fields can be added that enhance the quality of state transfer \cite{shi2005}, but perfect transfer remains impossible \cite{kempton2017} without engineering the coupling strengths.

In this paper, we take the counterpoint of that approach, using engineered couplings, but `field-free', i.e.\ with no (or uniform) magnetic fields. This would further reduce the experimental control required in synthesising the Hamiltonian. Such a restriction creates a distinction between different Hamiltonian models, such as the exchange and Heisenberg models. In Section \ref{sec:imposible}, we shall see that although the field-free restriction is largely irrelevant to the exchange Hamiltonian, perfect state transfer is impossible with a field-free Heisenberg model. 

In the absence of perfect transfer, we consider the potential for high quality state transfer in Sec.\ \ref{sec:spec_constrain} and give estimates on the minimum time to achieve this, drawing negative comparisons with the field-full case. Sec.\ \ref{sec:pretty} considers another possibility that remains: arbitrarily accurate transfer. One possible definition of this, ``pretty good transfer'', has been studied for uniformly coupled systems of both the Heisenberg and exchange variety \cite{coutinho2016,vanbommel2016,banchi2017}. However, no time estimates have been forthcoming. Here, we define a new paradigm that is appropriate to engineered systems. Although our best time estimates are comparable to the age of the Universe, and consistent with extrapolations from previous studies of perfect recurrences \cite{hemmer1958,peres1982,bhattacharyya1986}, there are good prospects for improvement; we discuss some options in Sec.\ \ref{sec:future} and assess the potential that they engender.

\subsection{Perfect State Transfer}

We start by reviewing the context and requirements for perfect state transfer. For excitation preserving Hamiltonians, i.e.\ where the Hamiltonian satisfies
$$
\left[H,\sum_{n=1}^NZ_n\right]=0,
$$
and $Z_n$ is the Pauli-$Z$ matrix applied to qubit $n$ of $N$, it is sufficient for us to focus on the problem of excitation transfer for a single excitation, which must evolve within the single excitation subspace, i.e.\ the space in which there is always exactly one qubit in state $\ket{1}$, and the others in $\ket{0}$. If an evolution of the form
$$
e^{-iHt_0}\ket{1000\ldots 0}=e^{i\phi}\ket{00\ldots 01}
$$
can be achieved for some phase $\phi$, then when an unknown state $\ket{\psi}$ is placed on the first qubit of a chain that is otherwise in the state $\ket{0}^{\otimes(N-1)}$, that state is perfectly transferred to the final spin, up to a corrective phase gate. The time $t_0$ is known as the state transfer time.

Two standard Hamiltonian forms, Heisenberg and exchange, model a wide variety of experimental systems. 
\begin{align}
H_{\text{Ex}}=&\frac12\sum_{n=1}^{N-1}J_n(X_nX_{n+1}+Y_nY_{n+1})+\frac12\sum_{n=1}^NB_nZ_n\label{eqn:Exmodel} \\
H_{\text{Heis}}=&H_{\text{Ex}}+\frac12\sum_{n=1}^{N-1}J_nZ_nZ_{n+1} \label{eqn:Heismodel}
\end{align}
(Here we consider a one-dimensional geometry.) The two are equivalent from the perspective of the single excitation subspace, both mapping to an arbitrary real, symmetric, tridiagonal matrix. For that reason, the literature often hasn't needed to distinguish between which underlying model is chosen: any difference can be absorbed into the $\{B_n\}$. One exception is studies of transfer in uniformly coupled systems, $J_n=1, B_n=0$ \cite{bose2003,banchi2017,coutinho2016,vanbommel2016}, where choice of model is crucial\footnote{Another notable exception is when multiple excitations are considered, where analysis of the exchange Hamiltonian is facilitated by the Jordan-Wigner transformation.}.

Requiring a field-free Hamiltonian, in which $\{B_n\}=0$, similarly separates the two models. For the exchange model, this imposes that eigenvalues must occur in $\pm\lambda$ pairs (with a 0 if the chain length is odd), while for the Heisenberg model, one of the eigenvalues must be 0, and the null vector must be the uniform superposition of all sites. How do these restrictions impact upon perfect and pretty good state transfer protocols?

When the single excitation subspace can be described by a real symmetric tridiagonal $N\times N$ matrix, the conditions for perfect state transfer within the 0/1 excitation subspaces are well understood \cite{kay2010a}:

\begin{lemma}\label{lem:PST}
End-to-end perfect state transfer can be achieved in a nearest-neighbour coupled chain with positive couplings $J_n>0$ in time $t_0$ if and only if (i) the system is centrosymmetric ($J_n=J_{N-n}$ and $B_n=B_{N+1-n}$); and (ii) the ordered eigenvalues $\{\lambda_n\}$ satisfy $(\lambda_n-\lambda_{n+1})t_0/\pi\in2\mathbb{N}-1$.
\end{lemma}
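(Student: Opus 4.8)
The plan is to work entirely in the single-excitation subspace, where $H$ is a real symmetric tridiagonal (Jacobi) matrix with strictly positive off-diagonal entries $J_n$, and to reduce the transfer condition $e^{-iHt_0}|1\rangle=e^{i\phi}|N\rangle$ to two separate requirements via an extremal inequality. Writing the spectral decomposition $H=\sum_n\lambda_n|v_n\rangle\langle v_n|$, the transfer amplitude is $f(t_0)=\langle N|e^{-iHt_0}|1\rangle=\sum_n e^{-i\lambda_n t_0}\,a_n$ with $a_n=\langle 1|v_n\rangle\langle v_n|N\rangle$. Since $e^{-iHt_0}$ is unitary and $|1\rangle,|N\rangle$ are unit vectors, perfect transfer is equivalent to $|f(t_0)|=1$. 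I would bound $|f(t_0)|\le\sum_n|a_n|\le(\sum_n\langle1|v_n\rangle^2)^{1/2}(\sum_n\langle N|v_n\rangle^2)^{1/2}=1$ using the triangle and Cauchy--Schwarz inequalities, and read off the two equality conditions: (A) $|\langle1|v_n\rangle|=|\langle N|v_n\rangle|$ for every $n$ (tightness of Cauchy--Schwarz together with the normalisations $\sum_n\langle1|v_n\rangle^2=\sum_n\langle N|v_n\rangle^2=1$), and (B) all terms $e^{-i\lambda_n t_0}a_n$ share a common phase. The remaining task is to match (A) to centrosymmetry (i) and (B) to the spectral condition (ii).

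Before doing so I would collect the standard facts about Jacobi matrices with $J_n>0$: the spectrum is simple, and no eigenvector can have a vanishing first (or last) component, since the three-term recurrence $J_k v(k+1)=(\lambda-d_k)v(k)-J_{k-1}v(k-1)$ propagates a zero boundary value into the zero vector. In particular every $a_n\neq0$, so the triangle-inequality equality condition (B) constrains all $N$ terms.

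For the equivalence (A)$\Leftrightarrow$(i), one direction is immediate: if $PHP=H$, where $P$ is the site-reversal matrix, then $P|v_n\rangle$ is again an eigenvector for $\lambda_n$, so by simplicity $P|v_n\rangle=\pm|v_n\rangle$ and hence $\langle N|v_n\rangle=\pm\langle1|v_n\rangle$. The converse is the crux of the whole lemma and I would settle it with the inverse spectral theorem for Jacobi matrices: such a matrix is uniquely determined by its eigenvalues $\{\lambda_n\}$ together with the first-site weights $w_n=\langle1|v_n\rangle^2$. The reversed matrix $PHP$ is another Jacobi matrix with positive off-diagonals; it shares the eigenvalues $\{\lambda_n\}$ (as $P$ is orthogonal) and has first-site weights $\langle1|Pv_n\rangle^2=\langle N|v_n\rangle^2$. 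Assuming (A) forces these to equal $w_n$, so $H$ and $PHP$ carry identical spectral data and must coincide, which is centrosymmetry.

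Finally, under centrosymmetry I would fix the signs in (B). Writing $P|v_n\rangle=\epsilon_n|v_n\rangle$ with $\epsilon_n=\pm1$ gives $a_n=\epsilon_n\langle1|v_n\rangle^2$, so $\mathrm{sign}(a_n)=\epsilon_n$. The key structural input is that these parities strictly alternate with the eigenvalue order, which follows from the oscillation (Sturm) theorem since the eigenvector of the $n$-th ordered eigenvalue has exactly $n-1$ nodes; ordering the $\lambda_n$ decreasingly yields $\epsilon_n=(-1)^{n-1}$. Condition (B) then says $e^{-i\lambda_n t_0}\epsilon_n$ is independent of $n$, which by transitivity reduces to the consecutive relations $e^{-i(\lambda_n-\lambda_{n+1})t_0}=\epsilon_{n+1}/\epsilon_n=-1$. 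Since $\lambda_n>\lambda_{n+1}$ and $t_0>0$, this is exactly $(\lambda_n-\lambda_{n+1})t_0/\pi\in2\mathbb{N}-1$, which is (ii). I expect the main obstacle to be the converse of (A): everything else is extremal-inequality bookkeeping plus standard Jacobi-matrix structure, but deducing centrosymmetry from the end-point amplitude data genuinely requires the uniqueness half of the inverse spectral theorem.
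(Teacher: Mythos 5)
The paper does not prove this lemma itself but cites it as the known characterisation from \cite{kay2010a}, and your argument is correct and follows essentially that standard route: reduce $|\langle N|e^{-iHt_0}|1\rangle|=1$ to equality conditions in the triangle and Cauchy--Schwarz inequalities, obtain centrosymmetry from the uniqueness half of the Jacobi inverse spectral theorem, and obtain the odd-integer spacing from the Sturm-oscillation alternation of eigenvector parities. No gaps; the inverse-spectral uniqueness step you flag as the crux is indeed the essential ingredient in the literature's proof as well.
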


Select any spectrum that satisfies those properties and you can reverse engineer the couplings that yield that spectrum via an inverse eigenvalue problem \cite{karbach2005,gladwell2005}. Thus, it is simple to specify appropriate spectra and design corresponding chains. Indeed, the standard choice \cite{christandl2004} uses a spectrum $0,\pm1,\pm2,\pm3,\ldots$ and is consequently field-free for the exchange model.

Since all perfect state transfer systems have a perfect revival, wherein the excitation reappears perfectly on the input site, at time $2t_0$, it is often helpful to consider the conditions for a perfect revival/recurrence.
\begin{lemma}\label{lem:Previve}
The first site on a nearest-neighbour coupled chain with positive couplings $J_n>0$ exhibits a perfect revival in time $t_r$ if and only if the ordered eigenvalues $\{\lambda_n\}$ satisfy $(\lambda_n-\lambda_{n+1})t_r/\pi\in2\mathbb{N}$.
\end{lemma}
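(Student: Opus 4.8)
The plan is to work entirely within the single-excitation subspace, where the Hamiltonian acts as a real symmetric tridiagonal (Jacobi) matrix $M$ with strictly positive off-diagonal entries $J_n$. Writing the spectral decomposition $M=\sum_n\lambda_n u_n u_n^T$ with orthonormal eigenvectors $u_n$, and denoting by $e_1$ the state of a single excitation on site $1$, a perfect revival at time $t_r>0$ is the statement that $e^{-iMt_r}e_1=e^{i\phi}e_1$ for some phase $\phi$. First I would expand both sides in the eigenbasis and match the coefficient of each $u_n$, reducing this vector identity to the scalar conditions
$$
e^{-i\lambda_n t_r}\langle e_1,u_n\rangle = e^{i\phi}\langle e_1,u_n\rangle\qquad\text{for all }n.
$$
The whole argument then turns on being able to cancel the overlaps $\langle e_1,u_n\rangle$.

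The key structural input is the standard fact that a Jacobi matrix with strictly positive off-diagonal entries has a simple spectrum and that every eigenvector has a nonzero first component, $\langle e_1,u_n\rangle\neq 0$. The nonvanishing follows from the three-term recurrence generated by $M$: were the first component zero, the recurrence would force the entire eigenvector to vanish; simplicity follows from an analogous argument and is classical for irreducible tridiagonal matrices. Granting this, the scalar conditions immediately give $e^{-i\lambda_n t_r}=e^{i\phi}$ for every $n$, i.e.\ all the quantities $\lambda_n t_r+\phi$ lie in $2\pi\mathbb{Z}$.

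For the forward direction I would subtract consecutive conditions to obtain $(\lambda_n-\lambda_{n+1})t_r\in 2\pi\mathbb{Z}$; since the eigenvalues are ordered and distinct (by simplicity) and $t_r>0$, each gap is a strictly positive multiple of $2\pi$, which is precisely $(\lambda_n-\lambda_{n+1})t_r/\pi\in 2\mathbb{N}$. For the converse, I would observe that the hypothesis on consecutive gaps telescopes, so that every pairwise difference satisfies $(\lambda_m-\lambda_n)t_r\in 2\pi\mathbb{Z}$; hence the phases $e^{-i\lambda_n t_r}$ all coincide with a common value $e^{i\phi}$ (take $\phi=-\lambda_1 t_r$), and substituting back into the eigenbasis expansion of $e^{-iMt_r}e_1$ reconstitutes $e^{i\phi}e_1$, giving the revival.

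The main obstacle — really the only nontrivial ingredient — is the nonvanishing of the endpoint eigenvector components, since without it the overlaps could not be cancelled and the spectral condition would not be forced; everything else is bookkeeping with phases. It is worth contrasting the outcome with Lemma~\ref{lem:PST}: a revival at the input site demands no centrosymmetry, only the spectral gap condition, and the appearance of \emph{even} multiples here (versus the \emph{odd} multiples for transfer) makes transparent why every perfect-transfer chain necessarily revives at time $2t_0$.
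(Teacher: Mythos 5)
Your proof is correct. The paper itself states this lemma without proof, presenting it as a standard adaptation of the known perfect-state-transfer conditions (Lemma~\ref{lem:PST}, cited to the literature); your argument --- spectral decomposition in the single-excitation subspace, plus the classical facts that a Jacobi matrix with $J_n>0$ has simple spectrum and eigenvectors with nonvanishing first components, so the overlaps $\langle e_1,u_n\rangle$ can be cancelled --- is precisely the standard reasoning that underlies it, including the correct observation that centrosymmetry is not needed here and that simplicity of the spectrum is what makes each gap a strictly positive even multiple of $\pi/t_r$.
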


\subsection{Pretty Good Transfer}

In most scenarios where coupling strengths can be engineered, perfect transfer is possible, and the only discussion that can remain is whether high quality transfer can be realised at shorter times. However, we will see that perfect transfer for the field-free Heisenberg model is impossible, which leaves a broad spectrum of possibilities. Is high quality transfer possible? How long does it take? A limiting case is arbitrarily accurate state transfer. This has been formalised as `pretty good transfer', wherein it is required that for any error $\epsilon>0$, there should exist a time $t_\epsilon>0$ such that the fidelity satisfies
$$
F:=|\bra{00\ldots 01}e^{-iHt_\epsilon}\ket{1000\ldots 0}|^2>1-\epsilon.
$$
A characterisation of pretty good transfer was given in \cite{banchi2017}, and is here adapted to the specific scenario:

\begin{lemma}\label{lem:PGST}
End-to-end pretty good state transfer can be achieved in a nearest-neighbour coupled chain with positive couplings $J_n>0$ if and only if (i) the matrix describing the single excitation subspace is centrosymmetric; and (ii) for all sets of integers $\{l_i\}$ such that $\sum_il_i\lambda_i=0$ and $\sum_{i}l_{2i}$ is odd, $\sum_il_i\neq 0$.
\end{lemma}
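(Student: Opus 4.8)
The plan is to reduce pretty good transfer to a statement about one-parameter orbits on a torus and then apply Kronecker's theorem. Writing the real eigenvectors of the Jacobi matrix $H$ as $|v_n\rangle$ with eigenvalues $\lambda_n$, the transfer amplitude is $A(t)=\langle N|e^{-iHt}|1\rangle=\sum_n v_n(1)v_n(N)e^{-i\lambda_n t}$, and PGST asks whether $\sup_t|A(t)|=1$. Since $\{v_n(j)\}_n$ is orthonormal for each fixed site $j$, Cauchy--Schwarz gives $|A(t)|\le\sum_n|v_n(1)||v_n(N)|\le 1$, with equality in the last step iff $|v_n(1)|=|v_n(N)|$ for every $n$. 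Hence PGST forces this end-amplitude condition, and I would argue it is equivalent to centrosymmetry (i): a Jacobi matrix is determined by its spectrum together with the squared first components $\{v_n(1)^2\}$, so $v_n(1)^2=v_n(N)^2$ for all $n$ means $H$ and its mirror image $RHR$ share spectral data and therefore coincide. This disposes of (i) and lets me assume it henceforth.

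Assuming (i), the reflection operator commutes with $H$, so each eigenvector is symmetric or antisymmetric and $v_n(N)=(-1)^{n-1}v_n(1)$, the signs alternating by the usual Sturm oscillation/interlacing for a Jacobi matrix. Setting $p_n=v_n(1)^2>0$ (strictly positive, since no eigenvector of a positive-coupling Jacobi matrix vanishes at an endpoint) with $\sum_n p_n=1$, the amplitude becomes $A(t)=\sum_n(-1)^{n-1}p_ne^{-i\lambda_n t}$. Because $p_n>0$, the bound $|A(t)|\le 1$ is saturated exactly when the phases align, i.e.\ $(-1)^{n-1}e^{-i\lambda_n t}=e^{i\psi}$ for a common $\psi$; equivalently $(\lambda_n t)_n$ approaches the target point $\phi^{\psi}_{n}=(n-1)\pi-\psi$ on the torus $\mathbb{T}^N=(\mathbb{R}/2\pi\mathbb{Z})^N$. (For $\psi$ chosen so that $(\lambda_n-\lambda_{n+1})t/\pi$ is an odd integer this is precisely the perfect-transfer condition of Lemma~\ref{lem:PST}.)

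The core step is Kronecker's/Weyl's theorem: the closure of the orbit $\{(\lambda_n t)_n\bmod 2\pi:t\in\mathbb{R}\}$ is the subtorus annihilating the lattice of integer relations $L=\{l\in\mathbb{Z}^N:\sum_n l_n\lambda_n=0\}$, namely $\{\phi:\sum_n l_n\phi_n\equiv0\ (\mathrm{mod}\ 2\pi)\ \forall l\in L\}$. Thus $\sup_t|A(t)|=1$ iff some target $\phi^{\psi}$ lies in this subtorus. Imposing $\sum_n l_n\phi^{\psi}_n\equiv0$ and using $\pi\sum_n(n-1)l_n\equiv\pi\sum_i l_{2i}\ (\mathrm{mod}\ 2\pi)$, this reads $(-1)^{\sum_i l_{2i}}e^{i\psi\sum_n l_n}=1$ for all $l\in L$. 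On the sublattice $\{l\in L:\sum_n l_n=0\}$ the phase $\psi$ drops out and the requirement is exactly condition (ii); granting (ii), the character $l\mapsto(-1)^{\sum_i l_{2i}}$ factors through $l\mapsto\sum_n l_n$, whose image is some $d\mathbb{Z}$, and choosing $\psi$ with $e^{i\psi d}$ equal to the induced sign makes the identity hold across all of $L$. Hence (i) and (ii) imply PGST.

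For the converse I would show that failure of (ii) genuinely blocks PGST, not merely perfect transfer: if some $l^{*}\in L$ has $\sum_n l^{*}_n=0$ but $\sum_i l^{*}_{2i}$ odd, then no $\psi$ satisfies the alignment identity, so the compact circle $M=\{\phi^{\psi}\}$ of global maximizers of the continuous map $\phi\mapsto|\sum_n(-1)^{n-1}p_ne^{-i\phi_n}|$ is disjoint from the compact orbit-closure; compactness then yields $\sup_t|A(t)|=\max_{\overline H}(\cdot)<1$, so $F$ is bounded away from $1$. The main obstacle I anticipate is this quantitative converse together with the clean invocation of the orbit-closure theorem: I must ensure that violating the Diophantine condition forces a strict fidelity gap, and that the single free phase $\psi$ can be chosen consistently across all relations in $L$ — which is precisely where the restriction of (ii) to the kernel of $l\mapsto\sum_n l_n$ is used.
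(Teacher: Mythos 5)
Your proof is correct. The paper does not actually prove Lemma~\ref{lem:PGST} itself --- it is stated as an adaptation of the characterisation in \cite{banchi2017} --- and your argument (centrosymmetry forced by equality in Cauchy--Schwarz together with uniqueness of a Jacobi matrix given its spectrum and squared end components; then the Kronecker/Weyl description of the orbit closure $\overline{\{(\lambda_n t)\bmod 2\pi\}}$ as the annihilator subtorus of the relation lattice $L$, with the free phase $\psi$ chosen consistently by factoring the character $l\mapsto(-1)^{\sum_i l_{2i}}$ through $l\mapsto\sum_n l_n$, and a compactness argument giving a strict fidelity gap when (ii) fails) is essentially the standard proof underlying that cited characterisation.
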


For the field-free Heisenberg model, the second condition simplifies to the requirement that if $\sum_il_i\lambda_i=0$, $\sum_il_{2i}$ must be even because $\lambda_1=0$, so $l_1$ can be chosen arbitrarily. We also note that the analysis here is closely related to the idea of perfect revivals (we shall make this connection more explicit later). In the case where there are no sets of integers $\{l_i\}$ such that $\sum_il_i\lambda_i=0$, there are useful estimates on the times at which perfect revivals occur \cite{peres1982,bhattacharyya1986,hemmer1958}. However, the point that these estimates make is that even for modest sized systems, the recurrence time is longer than the age of the Universe, and that this is not a useful phenomenon.

\section{The Impossibility of Perfect Transfer}\label{sec:imposible}

We will now prove our main claim -- that with the exception of $N=2$, there are no end-to-end perfect transfer chains for the field-free Heisenberg model. Our proof strategy is reminiscent of one used in \cite{kay2018}. First, we observe that if $h$ is the restriction of $H$ on the first excitation subspace, then in the field-free case it must satisfy
$$
h\sum_{n=1}^{N}\ket{n}=0,
$$
and $h$ is non-positive. Here we are using $\{\ket{n}\}$ as a basis of the $N$-dimensional space. One can think of $\ket{n}$ as specifying that there is a $\ket{1}$ on qubit $n$, and $\ket{0}$ on all other qubits. This tells us that the matrix must have a specific null vector $\ket{\lambda_1}$ as well as the desired spectrum. This is closely connected with recent studies of quantum state synthesis \cite{kay2017} which also tried to fix a spectrum and a null vector. In particular, this imposes
$$
\braket{\lambda_1}{1}^2=\frac{1}{N}.
$$
It is this which we shall show is impossible by virtue of the fact that the denominator contains no more than $\lfloor\log_2(N)\rfloor$ factors of two. Since the chain is centrosymmetric, as required by Lemma \ref{lem:PST}, if we know the eigenvalues, we can write down the first elements of the eigenvectors. In particular,
\begin{equation}
\braket{1}{\lambda_n}^2=R\frac{(-1)^{n+1}}{\displaystyle\prod_{\substack{m=1\\m\neq n}}^{N}\!(\lambda_n-\lambda_m)}\label{eqn:evecels}
\end{equation}
for some constant of proportionality $R$ such that $\sum_n\braket{1}{\lambda_n}^2=1$ \cite{gladwell2005}. Without loss of generality, we can take the $\lambda_n$ to be integers $p_n$ of alternating parity. Any constant of proportionality would simply be incorporated into the $R$ and hence cancelled.

\begin{lemma}\label{lem:Rdenom}
In a chain capable of perfect end-to-end state transfer, $R$ must be rational and its irreducible form must contain at least one factor of 2 in the denominator.
\end{lemma}
\begin{proof}
This has previously been proven in Lemma 4 of \cite{kay2018}. However, we shall now give an alternative proof.

We henceforth focus on writing $R$ with a common denominator, and determining how many factors of two are in the numerator. By symmetry $\braket{1}{\lambda_n}=(-1)^{n+1}\braket{N}{\lambda_n}$ and $\sum_n\braket{1}{\lambda_n}\braket{\lambda_n}{N}=\bra{1}H\ket{N}=0$. We can thus subtract the two conditions, leaving
$$
2R\sum_{n=1}^{\lfloor N/2\rfloor} \frac{1}{\prod_{m=1,m\neq 2n}^{N}(\lambda_{2n}-\lambda_m)}=1.
$$
We put this all over a common denominator, which would start as being $\prod_{n,m:2n\neq m}(\lambda_{2n}-\lambda_m)$. However, there will be many factors in the numerator that will cancel. We're focussed on counting factors of two, which means we're only concerned with even terms, which only arise as $(\lambda_{2p}-\lambda_{2q})$. Every term in the sum over $n$ contains the factor $(\lambda_{2p}-\lambda_{2q})$ except for two: $n=p,q$. However, consider these terms which, up to a common factor, are
\begin{equation}
\lambda_{2q}\!\!\prod_{\substack{m=2\\n\neq 2q,2p}}^N\!\!(\lambda_{2q}-\lambda_m)+(-1)^{2q-2p-1}\lambda_{2p}\!\!\prod_{\substack{m=2\\m\neq 2p,2q}}^N\!\!(\lambda_{2p}-\lambda_m).\label{eqn:highpowers}
\end{equation}
Since $2q-2p-1$ is always odd, Eq.\ (\ref{eqn:highpowers}) is an odd function of $(\lambda_{2q}-\lambda_{2p})$. Hence, the entire numerator has this as a factor, which can be cancelled from the denominator. Repeating for all such factors cancels all even terms from the denominator of $1/2R$. $R$ is a rational number with an even denominator in the irreducible form.
\end{proof}

\begin{theorem}\label{thm:notransfer}
No field-free Heisenberg chains of length $N\geq 3$ are capable of perfect end-to-end state transfer.
\end{theorem}
\begin{proof}
For a Heisenberg chain with positive coupling strengths, the largest eigenvalue is $\lambda_1=0$. Thus,
$$
\braket{1}{\lambda_1}^2=R\frac{(-1)^{N-1}}{\displaystyle\prod_{m=2}^{N}\!\lambda_m}.
$$
As before, it is sufficient to consider the $\lambda_m$ being integers, such that $R$ is rational with a factor of 2 in the denominator. Moreover, the values $\lambda_{2m}$ are odd and $\lambda_{2m-1}$ are even. Hence the product $\displaystyle\prod_{m=2}^{N}\!\lambda_m$ contains at least $\lfloor (N-1)/2\rfloor$ factors of 2. Thus, any chain that is capable of perfect end-to-end transfer must have a rational value of $\braket{\lambda_1}{1}^2$, with the denominator of the irreducible form containing at least $\lfloor (N+1)/2\rfloor$ powers of two. Since the field-free Heisenberg chain has $\braket{\lambda_1}{1}^2=\frac{1}{N}$, this contains at most $\lfloor\log_2N\rfloor$ factors of 2.

The only instances in which we can reconcile $\lfloor\log_2N\rfloor\geq\lfloor (N+1)/2\rfloor$ are $N=2,4$. The first of these is well known \cite{bose2003}. We eliminate $N=4$ by repeating the proof and explicitly setting the spectrum to $(0,2a+1,2b,2c+1)$. This fixes
$$
\braket{1}{\lambda_1}^2=-\frac{(2a+1-2b)(2c+1-2b)}{8b(a+c+1-b)}.
$$
Since the denominator contains a factor of 8 and the numerator is odd, this cannot be $\braket{1}{\lambda_1}^2\neq\frac14$. We conclude that, no matter how you choose the coupling strengths of a chain, the only field-free Heisenberg model to exhibit end-to-end perfect state transfer is the $N=2$ case.

\end{proof}

These results are consistent with previous results \cite{coutinho2014,alvir2016} on uniform weighted graphs with the Heisenberg model (for which the Hamiltonian in the single excitation subspace is just the Laplacian of the graph).

\section{Spectral Constraints on Field-Free Heisenberg Models}\label{sec:spec_constrain}

In the absence of perfect state transfer, we are now on a mission to see what properties we can recover. Are perfect revivals still possible? Is high fidelity transfer possible? What is the minimum required time (as a function of chain length) to achieve a high transfer fidelity?

Our understanding of state transfer is intimately linked with the spectral properties of the chain. As such, we wish to understand some constraints on that spectrum.

\begin{lemma}\label{lem:evalbound}
If the maximum coupling strength of a field-free Heisenberg model is constrained to $J_{\max}$, then the ordered eigenvalues $\lambda_n$ are bound by
$$
\lambda_n\geq 2J_{\max}\left(\cos\left(\frac{\pi(n-1)}{N}\right)-1\right).
$$
\end{lemma}
\begin{proof}
Let $E=\sum_{i=1}^{N-1}\proj{i}-\sum_{i=1}^{N-1}\ket{i+1}\bra{i}$, $J_i=J_{\max}\tilde J_i$ and $\tilde J=\sum_{i=1}^{N-1}\tilde J_i\proj{i}$. The matrix $h=-J_{\max}E\tilde JE^T$ is similar to $\tilde h=-J_{\max}\tilde JE^TE$ (two matrices $AB$ and $BA$ are similar, so let $A=E$ and $B=\tilde JE^T$). For any two matrices $A,B$, their ordered singular values satisfy 
$\sigma_i(AB)\leq\sigma_i(A)\|B\|$ \cite{hogben2014}.
The singular values are the absolute value of the eigenvalues, which are negative for $h$ and $\tilde h$.
Since $\|\tilde J\|= 1$, it must be that
$$
\lambda_n\geq J_{\max}\eta_n
$$
where $\eta_n$ are the eigenvalues of $-E^TE$ \cite{bose2003}:
$$
\eta_n=2\cos\left(\frac{\pi(n-1)}{N}\right)-2.
$$
\end{proof}

The lemma reveals some crucial properties. For example, the smallest eigenvalue gap can be no larger than $2J_{\max}(1-\cos\frac{\pi}{N})\sim\frac{\pi^2J_{\max}}{N^2}$. Had perfect transfer been possible, this indicates that the transfer time scales as $\sim N^2/J_{\max}$. This is $O(N)$ worse than the fastest transfer that was possible with the exchange model \cite{yung2006,kay2016b}. It also shows that in models where perfect revival is possible, the perfect revival time must scale in the same way.

Moreover, even in the case of imperfect state transfer, this result strongly suggests that it will require a time $\Omega(N^2/J_{\max})$ in order to achieve high fidelity. To see this, recall Equation (\ref{eqn:evecels}) -- a large weight for $|\braket{1}{\lambda_n}|^2$ is achieved for those eigenvectors with the closest eigenvalue spacings. Since one needs the eigenvalue conditions of perfect state transfer to at least be approximately satisfied on a subset of eigenvalues whose total weight is large in order to get high quality transfer, it is strongly indicated that the relevant gaps will be $O(J_{\max}/N^2)$. Again, since we can already perform vastly better with the introduction of a magnetic field, the use of the exchange model, or a very minimal amount of control \cite{murphy2010}, these use cases are largely eliminated. Nevertheless, for completeness, we wish to enumerate some possibilities.

\subsection{Perfect Revivals in a Field-Free Setting}\label{sec:revival}

Since perfect state transfer models all have a perfect revival, it is natural to wonder whether any field-free Heisenberg model can exhibit perfect revivals. We have already seen that the minimal time for such a revival is $J_{\max}t_0\sim N^2$. We now report a set of couplings that saturates this scaling. This model was first stated in \cite{albanese2004}, although was not recognised as a field-free Heisenberg model, and is based on the Hahn polynomials:
\begin{equation}
J_n=n(N-n). \label{eqn:PGcouplings}
\end{equation}
The spectrum is $\{-n(n-1)\}_{n=1}^{N}$. The eigenvectors $\ket{\lambda_n}$ are given in \cite{albanese2004}, including the final elements:
\begin{equation}
|\braket{\lambda_n}{N}|^2=\frac{(2n-1)(N-1)!^2}{(N+n-1)!(N-n)!}=\frac{\binom{2N-2}{N-n}-\binom{2N-2}{N-n-1}}{\binom{2N-2}{N-1}}. \label{eqn:start}
\end{equation}
Since every eigenvalue is an even integer, this chain has a perfect revival at time $t_0=\pi$, with 0 phase (i.e.\ $J_{\max}t_0=N^2\pi/4$, which is only a factor of $\pi^2/8$ longer than the predicted limit). While the matrix is centrosymmetric, its eigenvalues are not compatible with perfect transfer at time $\pi/2$ for $N>2$. 

\subsection{Numerical Solutions for High Fidelity Transfer}\label{sec:numbers}

We now wish to construct some high transfer fidelity examples of Heisenberg chains, where the high fidelity is achieved at close to the optimum time (rather than waiting arbitrarily long times, as in Sec.\ \ref{sec:pretty}). To do this, we note that the similarity transforms of Lemma \ref{lem:evalbound} hint at a very close connection to the uniformly coupled case. As such, we use this as our starting point and recall the numerical perturbative strategy of \cite{karbach2005}: we fix the smallest eigenvalue gap, $\delta$, and try to perturb the system such that subsequent gaps are integer multiples of $\delta$. Since this would yield perfect transfer at $t_0=\pi/\delta$ (which is $O(N^2)$), Theorem \ref{thm:notransfer} proves that it is impossible to achieve with \emph{all} eigenvalues. Parameter counting suggests that it should be possible with approximately $N/2$ of the eigenvalues -- we have $\lfloor N/2\rfloor$ coupling parameters in a symmetric system, suggesting we can control about this many eigenvalues. To maximise the fidelity, we choose to fix the eigenvalues with the largest weights of the eigenvectors on the first/last site.

Given the perturbative strategy, the eigenvector elements are close to those of the unperturbed system \cite{bose2003},
$$
|\braket{1}{\lambda_n}|^2=\frac{2-\delta_{n,1}}{N}\cos^2\left(\frac{\pi (n-1)}{2N}\right).
$$
The total weight of the `good' eigenvectors is then
$$
W=\sum_{n=1}^{N/2}\frac{2-\delta_{n,1}}{N}\cos^2\frac{\pi (n-1)}{2N}=\frac{N-1}{2N}+\frac{1}{2N\tan\frac{\pi}{2N}}.
$$
In the large $N$ limit, we have $W\rightarrow \frac{1}{2}+\frac{1}{\pi}\approx 0.82$.
A typical excitation fidelity can be expected to be about $F_\text{ex}=W$ (all the `good' eigenvalues aligning perfectly, and the unfixed ones randomly distributed), yielding a state transfer fidelity of $F=\frac13+(1+\sqrt{F_\text{ex}})^2/6\approx 0.94$ while the excitation transfer fidelity should be no worse than $F_\text{ex}=2W-1$ (all the `good' eigenvalues aligning perfectly, and the others destructively interfering).

\begin{figure}
\centering
\includegraphics[width=0.45\textwidth]{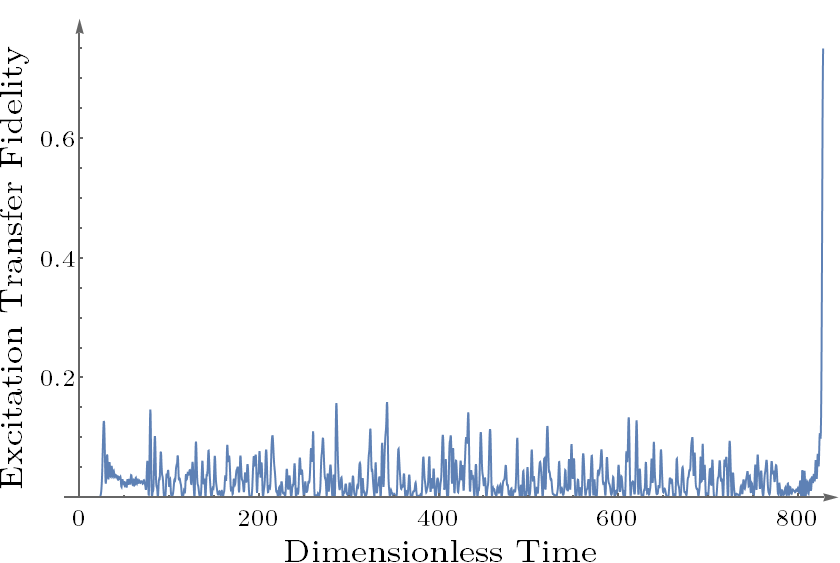}
\caption{Time evolution of a field-free chain of 51 qubits, modified from a uniform coupling to ensure high quality transfer.}\label{fig:example}
\end{figure}

In terms of the perturbative strategy we require, we can introduce a multiplicative change
\begin{equation}\label{eqn:perturb}
J_{\max}\tilde J^{(n+1)}2=J_{\max}\tilde J^{(n)}(\identity+\delta J).
\end{equation}
The new eigenvalues $\lambda^{(n+1)}_m$ can be written as an update from the previous values up to $O(\delta J^2)$,
\begin{equation}\label{eqn:evalperturb}
\frac{\lambda^{(n+1)}_m-\lambda^{(n)}_m}{\lambda^{(n)}_m}=\bra{\lambda^{(n)}_m}\delta J\ket{\lambda^{(n)}_m}.
\end{equation}
Note that these eigenvectors $\ket{\lambda^{(n)}_n}$ are the eigenvectors of $\tilde JE^TE$, \emph{not} our original matrix $E\tilde JE^T$. This is a linear problem for the change in coupling strengths $\delta J$ that we can solve and iterate towards improved values. In practice, we achieve convergence to machine precision to the target parameters within about 5 steps. For instance, we created a chain of 51 qubits with an excitation transfer fidelity of $F_{\text{ex}}=0.749$ by fixing the 25 largest non-zero eigenvalues. The evolution is depicted in Fig.\ \ref{fig:example}.

\section{Pretty Good State Transfer}\label{sec:pretty}

The impossibility of perfect transfer in a field-free setting conveys nothing about what can be achieved in the case of pretty good state transfer, where one assesses whether arbitrarily accurate transfer can be achieved by waiting long enough. This is because the eigenvalues are no longer constrained to have the perfect integer spacing. Indeed, pretty good transfer has been demonstrated in some cases of uniform field-free Heisenberg models \cite{banchi2017}. To date, these analyses lack an estimate of the time $t_\epsilon$ to achieve a particular maximum error $\epsilon$.

We might approach a rough estimate in the following way. Firstly, we note that a pretty good transfer is achieved by finding a $t_\epsilon$ and integers $p_n$ such that
$$
|\lambda_nt_\epsilon/\pi-p_n-\gamma|<\epsilon
$$
where $e^{-ip_n\pi}=(-1)^{n+1}$. Clearly, this means that 
$$
|2\lambda_nt_\epsilon/\pi-2p_n-2\gamma|<2\epsilon.
$$
In other words, a pretty good transfer of inaccuracy $\epsilon$ in time $t_\epsilon$ will yield a pretty good revival (inaccuracy $2\epsilon$) in time $2t_\epsilon$. Since the converse need not be true, estimating the length of time for perfect revivals gives a lower bound on the length of time for perfect transfer. There have been a number of such attempts \cite{peres1982,bhattacharyya1986,hemmer1958}, but these all assume that the eigenvalues are independent over the rationals, which is not the case (see, e.g.\ \cite{banchi2017}). They all estimate a time of order $\epsilon^{-N}$ for system size $N$ and accuracy $\epsilon$, and conclude that even for very modest systems, this time is longer than the age of the Universe. We conjecture that a good estimate on the time can be determined by replacing the system size $N$ in $\epsilon^{-N}$ with the number of independent eigenvalues. This suggests that it would take highly exceptional cases to permit a reasonable protocol based on pretty good transfer.

Can we construct pretty good transfer instances for an engineered field-free Heisenberg chain? Obviously, we would like to make the time $t_\epsilon$ as small as possible. One promising strategy is to recognise that our chain, which we now fix to be of length $2N$, must possess mirror symmetry (must be centrosymmetric).
\drawchain[0.48\textwidth]{$J_1$,$J_2$,$J_3$,$J_4$,$a$,$J_4$,$J_3$,$J_2$,$J_1$}
Hence, we can decompose the chain into symmetric and antisymmetric subspaces, each being an effective chain of length $N$. The symmetric subspace can be depicted as
\drawchain[0.32\textwidth]{$J_1$,$J_2$,$J_3$,$J_4$}
where each link corresponds to a Heisenberg coupling of the indicated strength. The antisymmetric subspace is
\drawchainwithfields[0.32\textwidth]{$J_1$,$J_2$,$J_3$,$J_4$}{0,0,0,0,"$2a$"}
where the $2a$ term acts as an additional effective magnetic field. In this picture, perfect state transfer is just the process of transferring
$$
(\ket{1}+\ket{2N})+(\ket{1}-\ket{2N})\rightarrow (\ket{1}+\ket{2N})-(\ket{1}-\ket{2N}).
$$
In other words, both effective chains want perfect revivals at the same time, with a relative phase of $\pi$ between the two reviving states. Hence, if we can build the symmetric chain so that it has perfect (rather than arbitrarily accurate) revivals, that is half the challenge completed. Indeed, it halves the number of rationally independent eigenvalues, and the time estimate is essentially that of the perfect revival for just the anti-symmetric chain (reinforcing our conjecture about the role of $N$). 

For the sake of concreteness, we can use the analytic solution specified in Sec.\ \ref{sec:revival}. We still have freedom to select the parameter $a$ in order to try and encourage that, for the antisymmetric subspace: (i) a pretty good revival occurs at some integer multiple of $t_0$, and (ii) the phase of the reviving state is $-1$ relative to the symmetric space. Since pretty good revivals are generic \cite{bocchieri1957}, we should concentrate on the relative phase. Let the eigenvalues of the symmetric subspace be $\lambda_n$, and those of the antisymmetric subspace be $\lambda_n^-$. They satisfy an interlacing property $\lambda_n>\lambda_{n+1}^->\lambda_{n+1}$. The symmetric subspace has perfect revivals at all times $kt_0$ for $k\in\mathbb{Z}$. Now, let's assume that, at time $t_1=kt_0$, the antisymmetric subspace has a perfect revival, but with phase $\phi$. Since
$$
(\lambda_n^--\lambda_n)t_1=\phi+2p\pi,
$$
it follows that
$$
\sum_n\lambda_n^--\sum_n\lambda_n=\frac{N\phi+2q\pi}{t_1}.
$$
The left-hand side is just $\text{Tr}(H_+)-\text{Tr}(H_-)=2a$, where $H_\pm$ are the anti/symmetric subspaces of the Hamiltonian in the single excitation subspace. If $t_1=t_0$, $q=0$ (different integer values can be chosen here) and $\phi=\pi$, then at every second opportunity of perfect revival on the symmetric subspace, $(2k+1)t_0$, the phase on the antisymmetric subspace is $-1$. So, if its pretty good revival coincides with that time, it will work. Hence, we would set $a=N/2$. However, to date, we have not proven a single case that satisfies Lemma \ref{lem:PGST}.

\subsection{Perturbative Approach}

Instead, let us return to the basic construction of our model for pretty good transfer, with Hamiltonian $h(a)$ based on the single parameter $a$ coupling two copies of the perfect revival chain that we introduced in Eq.\ (\ref{eqn:PGcouplings}), so that we have perfect revivals on the symmetric subspace. We will take advantage of the possibility to change coupling strengths, not available when studying uniform coupling, to demonstrate a ``pretty good route'' towards arbitrarily accurate state transfer.
\begin{definition}
The Hamiltonians $h(a)$ present a pretty good route to state transfer if, for any target accuracy $\epsilon$ there exists a parameter $a_{\epsilon}$ such that $h(a_{\epsilon})$ achieves a state transfer fidelity of at least $1-\epsilon$ in a time $t_{\epsilon}$.
\end{definition}
\noindent Instead of just adjusting $t$, we have the possibility to adjust $a$ as well, making the analysis much simpler.

In particular, let us take $a\ll 1$, permitting a perturbative expansion in the antisymmetric subspace:
$$
\lambda_n^-\approx\lambda_n+2a|\braket{\lambda_n}{N}|^2+O(a^2).
$$
Provided $a$ is small enough, the $O(a^2)$ terms are negligible. At times $kt_0$ for $k\in\mathbb{N}$, we know that $\lambda_nkt_0/\pi$ is an even integer. If $2a|\braket{\lambda_n}{N}|^2kt_0/\pi$ is an odd integer, we have perfect transfer up to the accuracy of the perturbative expansion, $O(ka^2)$. Eq.\ (\ref{eqn:start}) identifies the values $|\braket{\lambda_n}{N}|^2$. Let $a$ be a very small rational number, and
$$
k=\frac{1}{2a}\binom{2N-2}{N-1}=\frac{1}{2a}|\braket{\lambda_{N}}{N}|^2.
$$
In the special case of $N=2^r$, Kummer's Theorem conveys that $\binom{2N-2}{N-1}|\braket{\lambda_n}{N}|^2$ must be odd.
Hence, $2a|\braket{\lambda_n}{N}|^2k$ is an odd integer for all $n$, as required. The only error in the state transfer process is the result of the perturbative expansion. Thus, $a\rightarrow 0$ yields a pretty good route to state transfer. The fidelity is
\begin{align*}
F&=\left|\frac12-\frac12\sum_{n=1}^N\left|\braket{1}{\tilde\lambda_n}\right|^2e^{-i\lambda_n^-kt_0}\right|^2	\\
&=\left|\frac12+\frac12\sum_{n=1}^N\left|\braket{1}{\tilde\lambda_n}\right|^2e^{-i\delta\lambda_nkt_0}\right|^2
\end{align*}
where $\ket{\tilde\lambda_n}$ are the new eigenvectors and $\delta\lambda_n\sim a^2$ is the error in the new eigenvalues not taken into account by the first order approximation. This gives that $F=1$ up to a term $O(k^2a^4)$. Hence, if we select
$$
a\sim\frac{\sqrt{\epsilon}}{ka}=\frac{\sqrt{\epsilon}}{\binom{2N-2}{N-1}},
$$
any desired accuracy $\epsilon$ can be achieved for fixed $N$. The state transfer time consequently scales as
\begin{equation}\label{eqn:time_scale_pretty}
T\sim\frac{1}{\sqrt\epsilon}\binom{2N-2}{N-1}^2\sim\frac{4^N}{N\sqrt\epsilon}.
\end{equation}
Note that, in particular, this gives an exponential improvement in dependence upon $\epsilon$ as $N$ changes compared to our rough predictions for the standard concept of pretty good transfer. Nevertheless, times are still prohibitive. Even for $N=4$, a time of $12000t_0$ is required in order to achieve an $\epsilon<10^{-5}$, while for $N=16$, we already require a time $>10^{17}t_0$. This is an inherent flaw in our chosen route, being severely impacted by such a small value of $a$ and the corresponding time necessary to acquire sufficient phase difference between the symmetric and antisymmetric parts of the chain.

\subsubsection{Scaling Improvement}

Imagine that we want to achieve an $\epsilon$ that is larger than $2|\braket{\lambda_N}{1}|^2$ (for the sake of this argument, we will neglect the difference between $|\braket{\lambda_N}{1}|^2$ and $|\braket{\tilde\lambda_N}{1}|^2$). This already allows us to achieve the scale required for fault tolerance, even at $N=8$. In this case, it is not necessary to get a phase of $-1$ from every eigenvalue. Even in the worst case where the last eigenvector gives a phase of $+1$, we can achieve the fidelity $|\half+\half(1-2|\braket{\lambda_N}{1}|^2)|^2>1-\epsilon$. In principle, this can be achieved for a smaller value of $k$ than used above, and hence the time can be shortened.

Taking this as a serious proposition, we keep only the largest $M$ eigenvalues. The total weight of these terms is
$$
\frac{\sum_{n=1}^M\binom{2N-2}{N-n}-\binom{2N-2}{N-n-1}}{\binom{2N-2}{N-1}}=1-\frac{\binom{2N-2}{N-M-1}}{\binom{2N-2}{N-1}},
$$
facilitating a fidelity
$$
F\sim1-2\frac{\binom{2N-2}{N-M-1}}{\binom{2N-2}{N-1}}>1-2\left(\frac{N-1}{N-M-1}\right)^{M}.
$$
Assuming $M\ll N$, $F\sim 1-2e^{-M^2/N}$. Thus, we are motivated to select $M\sim \sqrt{N}\log\frac{1}{\epsilon}$.

What improvement in $k$ can be expected by using only $O(\sqrt{N})$ eigenvectors instead of all $N$? The best time is readily calculated for a particular case:
\begin{equation}
k=\frac{1}{2a}\text{gcd}\left\{\binom{2N-2}{N-n}-\binom{2N-2}{N-n-1}\right\}_{n=1}^M. \label{eqn:k}
\end{equation}
Again, for $N=2^r$, we are already guaranteed that all the integer values will be odd (it may be that by neglecting some values, other $N$ also become a possibility). To proceed analytically, we assume $k=\frac{1}{2a}\frac{(N+M)!}{(N-1)!}$, as this is certainly sufficient to remove the denominators from all the $\{2a|\braket{\lambda_n}{1}|^2\}_{n=1}^M$, although it may not be the optimum value. This means that the time scales like
$$
T\sim (N+M)^{2M}\sim e^{\sqrt{N}\log\frac1{\epsilon}\log N+\log^2\frac1{\epsilon}},
$$
which is a marked improvement over Eq.\ \ref{eqn:time_scale_pretty} (but cannot be used to arbitrary $\epsilon$). Nevertheless, the times remain prohibitive for all but the shortest chains.


\subsection{Future Prospects}\label{sec:future}

The pressing question for the future is whether some development of the current methodology could present better run-times. Obviously, it would be better if we could move out of the perturbative regime for $a$, as this is one assumption that severely suppresses the relative dynamics between the symmetric and antisymmetric subspaces. However, the other factor that has an even stronger effect is the variation in values $|\braket{\lambda_n}{1}|^2$. If we could instead engineer a field-free Heisenberg chain with perfect revivals and much more similar values of $|\braket{\lambda_n}{1}|^2$ (that, for example, have a smaller common denominator), that would have a far greater impact on the scaling time. Conceivably, the best that could be achieved is with $|\braket{\lambda_n}{1}|^2=\frac{1}{N}$ for all $n$, which would simply require $ak=N$. This would yield a time $T\sim N^2/\sqrt{\epsilon}$. If such a scaling could be achieved, pretty good transfer has a chance of being a relevant protocol. However, the conditions of perfect revival and $|\braket{\lambda_n}{1}|^2=\frac{1}{N}$ cannot be realised even for $N=3$ in a field-free model (a spectrum $0,-1,\sqrt{3}-2$ is required, up to scaling), and for $N=4$ there are no field-free Heisenberg models with $|\braket{\lambda_n}{1}|^2=\frac{1}{N}$ (even without the imposition of perfect revival). In the $N=4$ case, we have succeeded in creating a number of perfect revival chains, but none of them outperform the case specified in Eq.\ (\ref{eqn:PGcouplings}) for the purposes for pretty good transfer families.

As an example, we can set aside the field-free requirement\footnote{This example is purely for illustrative purposes. There are readily available perfect transfer solutions which are preferable.}, and use the solution given in \cite{kay2010a} where $h$ has off-diagonal elements
$$
J_n^2=\frac{n^2(N-n)(N+n)}{(2n-1)(2n+1)}
$$
and diagonal elements of 0. The system has a spectrum $\{-(N-1),-(N-3),\ldots,(N-3),(N-1)\}$ with a perfect revival time of $\pi$, and eigenvector elements $\braket{\lambda_n}{1}^2=\frac1N$.

\begin{figure}
\centering
\includegraphics[width=0.45\textwidth]{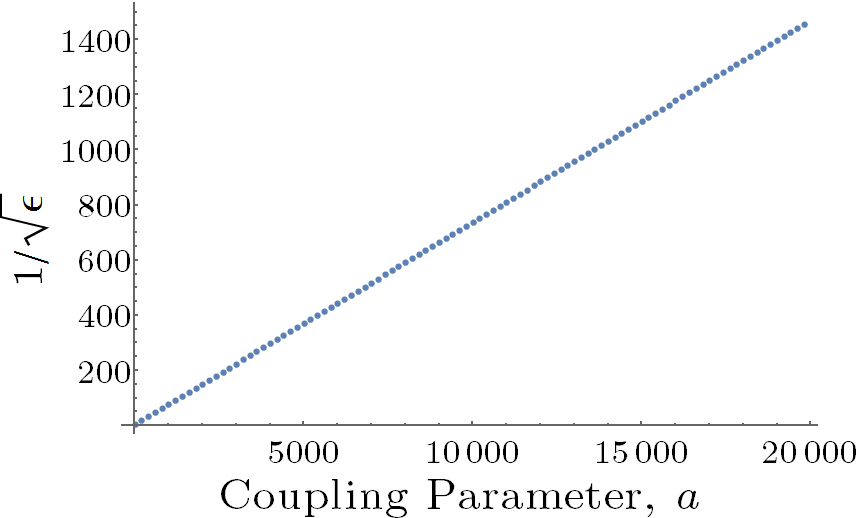}
\caption{Error $\epsilon$ in state transfer due to a mirror symmetric chain of length 40 with couplings given by the $N=20$ version of Eq.\ (\ref{eqn:lmcouple}), and a central coupling of $a$.}\label{fig:patterns}
\end{figure}

One way that we could move away from the regime of small $a$ is to take the opposite limit of $a$ being large. The antisymmetric subspace can then essentially be decomposed as a single site (the one with the field on) and a chain described by a Hamiltonian $h'$ that is the original Hamiltonian with the last row and column removed. If that $h'$ has perfect revivals at the same time, but with a relative phase of $\pi$, compared to $h$, then we achieve pretty good transfer up to the accuracy of the perturbative expansion that allows us to separate the single site. The larger $a$, the better the approximation: $\frac{1}{a}\rightarrow 0$ provides a pretty good route to state transfer. Building a chain with a prescribed spectrum for both $h$ (the symmetric subspace) and $h'$ (the perturbative expansion of the antisymmetric subspace) is a standard form of inverse eigenvalue problem. Another illustrative example, relaxing the field-free assumption, uses diagonal elements of $0$ and off-diagonal elements
\begin{equation}
J_n=\left\{\begin{array}{cc}
2\sqrt{N(N-1)}& n=N-1 \\
\sqrt{n(2N-n-1)}& \text{otherwise}
\end{array}\right.\label{eqn:lmcouple}
\end{equation}
In this case, both $h$ and $h'$ have perfect revivals at a time $\pi/2$, with a relative phase of $\pi$. The maximum eigenvalue scales as $N$, meaning that $a$ must be large compared to $N$. We have performed some numerical tests with the case $N=20$, as shown in Fig.\ (\ref{fig:patterns}). Rescaling such that all couplings and fields are bounded by a constant (e.g.\ 1), we get that the transfer time scales as $O(N/\sqrt{\epsilon})$. This is certainly optimal in terms of $N$, and indicates some of the possibilities available.

Unfortunately, it will be impossible to apply such ideas directly to field-free Heisenberg models because the proof of Theorem \ref{thm:notransfer} can be adapted to show that it is impossible. When all the $\lambda_n$ (eigenvalues of $h$) are even integers and all the $\mu_m$ (eigenvalues of $h'$ with the last row/column removed) are odd integers, $$|\braket{\lambda_1}{1}|^2=\frac{\prod_m\mu_m}{\prod_{n\neq 1}\lambda_n}.$$ The denominator clearly contains at least $N-1$ powers of 2, rendering it impossible to be $1/N$ for anything other than $N=2$. Instead, one would have to rely on a scheme in which most of the eigenvalues are chosen correctly.

\section{Conclusions}

In this paper, we have proven that there are no field-free Heisenberg chains with perfect state transfer between opposite ends of the chain. While this does not eliminate the possibility of perfect transfer between internal nodes\footnote{We have verified by brute force that perfect transfer between distinct internal nodes is impossible for $N<7$.}, it pushes one towards a consideration of both high quality transfer and pretty good state transfer. We have also argued that all solutions for high quality transfer require at least a time $O(N^2)$. This must be contrasted with cases where we introduce magnetic fields (thereby recovering solutions such as those for the exchange model \cite{christandl2004}). When we have some limited control over the system, what is the scaling of transfer time? \cite{murphy2010} used time control of magnetic fields, while we remain interested in the field-free case. This was considered in \cite{zhou2019}, but was non-committal on the scaling of the transfer time. We believe it has $O(N^2)$ scaling as any perturbative style approach must be perturbations relative to the $O(1/N^2)$ energy gaps. \cite{agundez2017} is more explicit about this, operating in a similar regime but a different numerical approach, and also achieves an $O(N^2)$ time scaling (with a large multiplicative overhead).

We have described a new paradigm for pretty good transfer, making use of the facility to tune coupling strengths, which appears to be more promising in terms of analysing the state transfer time. However, as it stands, the transfer times are prohibitive (as, we suspect, they are for all prior pretty good transfer schemes, such as \cite{coutinho2016,vanbommel2016,banchi2017}). We have outlined some future directions that can reduce the recurrence times massively under this new paradigm, towards $O(N/\sqrt{\epsilon})$, but we are yet to successfully apply them to a field-free Heisenberg model.

This study initially arose from the consideration of state synthesis questions \cite{kay2017}, in which we searched for systems with specific spectra, and had a particular null vector, such as the uniform vector (which would require a matrix of the field-free Heisenberg form). In the present setting, we also imposed centro-symmetry in order to generate the perfect state transfer. However, it is indicative that it is generally difficult to craft matrices of this form. A characterisation of which spectra are possible for a field-free Heisenberg model would be useful. Some partial steps in this direction are taken in the Appendix.

%

\newpage
\cleardoublepage
\appendix*

\section{Systems with Tunable Spectra and Ground States}

Some of the results in this paper started life with a very different purpose, which we summarise here for completeness. In addition to state transfer, there are a number of other protocols that one might be interested in accomplishing with a time-independent Hamiltonian. One of these is the preparation of interesting quantum states, particularly superpositions of a single excitation. A couple of methods have been considered for this \cite{kay2017,kay2017a}. We will focus on just one of those here \cite{kay2017}. The method starts by creating a Hamiltonian with a particular null state $\ket{\psi}=\sum_{n=1}^N\alpha_n\ket{n}$ by fixing the diagonal elements,
\begin{multline*}
h=-\sum_{n=1}^N\frac{\alpha_{n-1}J_{n-1}+\alpha_{n+1}J_n}{\alpha_n}\proj{n}+\\\sum_{n=1}^{N-1}J_{n}(\ket{n}\bra{n+1}+\ket{n+1}\bra{n}),
\end{multline*}
where we interpret $\alpha_0=\alpha_{N+1}=0$, and all others are non-zero. If we can then set all other eigenvalues to be (a multiplicative scaling of) an odd integer, then in the time $t=\pi$, an initial state $\ket{k}$ can be transformed into
$$
\ket{k}\mapsto(2\proj{\psi}-\identity)\ket{k}.
$$ 
The required $\ket{\psi}$ can hence be reverse engineered from the target state and the desired initial state.

With the $\{\alpha_n\}$ fixed, the challenge is to find the $\{J_n\}$ such that the spectral conditions are satisfied. We can now proceed just as we did in Sec.\ \ref{sec:numbers}. If we introduce
$$
E=\sum_{i=1}^{N-1}\left(\sqrt{\frac{\alpha_{i+1}}{\alpha_i}}\ket{i}-\sqrt{\frac{\alpha_i}{\alpha_{i+1}}}\ket{i+1}\right)\bra{i}
$$
with $\tilde J$ as before, then we have $h=-J_{\max}E\tilde JE^T$. For simplicity, we shall assume that all the $\alpha_n$ are positive, as are the $J_n$, such that $h$ is positive semi-definite. $h$ is similar to $\tilde h=-J_{\max}\tilde JE^TE$. Some of the spectral properties follow from \ref{lem:evalbound}.
\begin{corollary}\label{cor:main}
The eigenvalues $\lambda_n$ of $h'$ are related to the eigenvalues $\eta_n$ of $-E^TE$ by
$$
\lambda_n\geq J_{\max}\eta_n.
$$
\end{corollary}
This should help us bound, for a given target state, how long will be required to produce it via this method --- a time of at least $\pi/(J_{\max}\eta_{2})$ would be required to realise it perfectly. In general, $\eta_2$ is best calculated for any specific instance, although we can apply some general bounds. In particular, if we let
$$
\ket{\Psi}=\sum_{n=1}^{N-1}\frac{1}{\sqrt{\alpha_n\alpha_{n+1}}}\ket{n}, 
$$
then by design $E\ket{\Psi}=\frac{1}{\alpha_1}\ket{1}+\frac{1}{\alpha_N}\ket{N}$ and hence $\bra{\Psi}E^TE\ket{\Psi}=\frac{1}{\alpha^2_1}+\frac{1}{\alpha^2_N}$. Incorporating the normalisation of $\ket{\Psi}$, this proves that the smallest eigenvalue is
$$
\eta_2\geq-\frac{\frac{1}{\alpha_1^2}+\frac{1}{\alpha^2_{N}}}{\sum_{n=1}^{N-1}\frac{1}{\alpha_n\alpha_{n+1}}}.
$$
This will typically be $O(1/N)$, proving that at least a linear time is required, although this bound is often quite weak. We have already seen in the case that $\alpha_n=\frac1N$, the true smallest eigenvalue is $O(1/N^2)$.

In principle, Eqs.\ (\ref{eqn:perturb},\ref{eqn:evalperturb}) provide an iterative procedure for fixing a target spectrum via linear relations. Provided this linear problem is invertible at every step, we will rapidly converge on a good solution. The question remains when this linear problem is invertible. This is primarily governed by the number of eigenvalues one is trying to fix. It \emph{looks} like once we have fixed a particular problem instance by specifying the $\{\alpha_n\}$, there are $N-1$ coupling strengths that are free to determine the $N-1$ eigenvalues (since we know one is fixed to $0$). However, it is clear that this cannot be true in general.

\begin{lemma}
An ordered target spectrum $\{\lambda_n\}$ is impossible for $h'$ unless $\lambda_1=0$ and
$$
\frac{\lambda_k}{\lambda_N}\leq\frac{-\eta_k}{2}
$$
for all $k=2,\ldots N-1$.
\end{lemma}
\begin{proof}
By construction of $h'$, $\lambda_1=0$.

If we take a trial vector $\ket{\psi}=(\ket{i}-\ket{i+1})/\sqrt{2}$ such that $J_i=J_{\max}$, then this shows that
$$
\lambda_N\leq\bra{\psi}h'\ket{\psi}=-\frac12(4J_{\max}+J_{i+1}+J_{i-1})\leq -2J_{\max},
$$
Contrast this with $J_{\max}\geq \frac{\lambda_k}{\eta_k}$. If there is a $k$ such that
$$
\frac{\lambda_k}{\eta_k}> -\frac{\lambda_N}{2},
$$
there cannot be a satisfying value of $J_{\max}$.
\end{proof}

For example, imagine we wanted to create a Heisenberg chain, $\alpha_i=1/\sqrt{N}$, with a linear spectrum $0,-1,-2,\ldots,1-N$. With $k=2$, we are comparing $(N-1)$ with $1-\cos(\pi/N)$. Clearly the former is larger than the latter for $N>2$, and hence this choice is impossible (whether or not one tries to impose symmetry on the coupling strengths).

Even when none of these conditions is violated, such an iterative algorithm often struggles to find solutions. We are far from a complete understanding as to why.

\end{document}